\newcommand{\bfx}{{\bf x}}
\newcommand{\bfy}{{\bf y}}
\newcommand{\bfalpha}{\bm{\alpha}}
\newcommand{\bfeta}{\bm{\eta}}
\begin{document}
     
\title{\Large Change Point Detection for Compositional Multivariate Data 
}
\author{Prabuchandran K.J. \thanks{prabuchandra@iisc.ac.in, Amazon-IISc Postdoctoral Scholar.} \\
\and
Nitin Singh\thanks{nitisin1@in.ibm.com, IBM Research Labs, Bangalore, India.}
\and
Pankaj Dayama\thanks{pankajdayama@in.ibm.com, IBM Research Labs, Bangalore, India.}
\and
Vinayaka Pandit\thanks{pvinayak@in.ibm.com, IBM Research Labs, Bangalore, India.}
}
\date{}

\maketitle







\begin{abstract} \small\baselineskip=9pt
Change point detection algorithms have numerous applications in fields of scientific and economic importance.  We consider the problem of change point detection on compositional multivariate data (each sample is a probability mass function), which is a practically important sub-class of general multivariate data. While the problem of change-point detection is well studied in univariate setting, and there are few viable implementations for a general multivariate data, the existing methods do not perform well on compositional data. In this paper, we propose a parametric approach for change point detection in compositional data. Moreover, using simple transformations on data, we extend our approach to handle any general multivariate data. Experimentally, we show that our method performs significantly better on compositional data and is competitive on general data compared to the available state of the art implementations.
\end{abstract}


\section{Introduction}
Change point detection arises in a wide variety of applications like time series analysis \cite{chen2011parametric},  fault detection in industrial processes, segmentation of signals in bio-medical and seismic signal processing, dynamic social networks, online advertising and financial markets \cite{basseville1993detection}. In such applications, one is presented with a sequence of vector of observations and the goal is to identify the time points where the distributional properties of the observed data change. In many applications, however, we often have the multivariate data as sequence of probability mass functions (also known as {\em compositional data} \cite{grunwald1993time}). Examples include, the percentage revenue contribution of portfolio of products in a monthly sales data, the proportion of time spent by an equipment in different operating modes/efficiency band in a given time period etc.  Compositional time series analysis has been studied in literature in the context of forecasting, state space modeling \cite{snyder2017forecasting}. However, in compositional data our focus is on the change point detection problems.

The changes occuring in the distribution of the data sequence, based on the nature of applications, has been modeled as piecewise iid data models \cite{killick2012optimal}, linear or structural change models 
and Markov models \cite{montanez2015inertial}.
Further, based on specific assumptions on data like sparsity and high dimensionality \cite{wang2018high},  Graph based nature \cite{chen2015graph}, underlying subspace structure \cite{kawahara2007change}, functional nature \cite{berkes2009detecting} numerous change point detection techniques have been developed.  A comprehensive treatment of various data modeling approaches for change point detection along with different solution methodologies has been provided in \cite{aminikhanghahi2017survey} and \cite{truong2018review}.

There has been significant body of research on change point algorithms for the univariate (scalar) data e.g. Binary segmentation, Segment neighborhood, PELT \cite{killick2012optimal} and computationally efficient procedure like Group Fused Lasso \cite{bleakley2011group}. However, there are very few viable change point algorithms for general-purpose multivariate data  \cite{liu2013change}, \cite{matteson2014nonparametric}, \cite{wang2018high} and in particular, for compositional multivariate data. 

The change point algorithms can be dichotomized to fall under either Bayesian or non-Bayesian (frequentist) paradigms. Under the Bayesian paradigm, a prior distribution is assumed on the location of change points as well as on the parameters and then based on the observations appropriate posterior update is carried out for the location of change points \cite{adams2007bayesian}. However, one has to appropriately choose the prior distributions and if required, approximate the posterior computation. Under the non-Bayesian paradigm, the likelihood of the observations coming from one process is compared against its alternative piecewise segmented processes. A test statistic is computed based on some cost criterion like likelihood ratio (see Section \ref{dm}) and by comparing the test statistic with respect to a suitable threshold the better of the two process is chosen.  Note that choosing an appropriate threshold is still a open research problem \cite{chen2011parametric}. 

Another broad categorization of the change point detection algorithms is based on parametric or non-parametric modeling of the data. In the parametric setting \cite{chen2011parametric}, the data is assumed to come from a known family of distribution whose parameters are unknown and needs to be determined. The change point analysis then involves detecting changes in the parameters for the underlying family of distributions. The applicability of the parametric approach depends on the flexibility of the assumed distribution family.
 
In the non-parametric setting \cite{brodsky2013nonparametric}, it is assumed the data comes from piecewise segmented densities that do not necessarily have a specific form. Here, non-parametric density estimation tools are utilized to estimate the segmented densities and the likelihood test based on the ratio of densities is performed to identify change points. In \cite{kawahara2009change}, it is shown that it is often advantageous to estimate the ratio of the densities by a suitable model rather than estimating individual densities. In other non-parametric approaches \cite{matteson2014nonparametric}, the existence of density is not assumed instead the existence of certain moments is assumed. In \cite{matteson2014nonparametric},  two change point algorithms namely E-Divisive and E-Agglomerative for the multivariate time series data are proposed.  E-Divisive method in \cite{matteson2014nonparametric} is the state-of-art technique shown to perform well on multivariate data.
In \cite{wang2018high}, high dimensional time-series has been considered and  an algorithm (InspectChangePoint) for detecting sparse mean changes using sparse singular value decomposition on CUMSUM \cite{basseville1993detection} transformation of data matrix has been developed.
 
\subsection{Our contribution}
In this paper, we propose an online parametric Dirichlet change point algorithm (ODCP) when the observations form compositional data and subsequently extend our algorithm to unconstrained multivariate data.
 
Below are the salient features of our approach:
\begin{itemize}
\item We use Dirichlet parameterization for the compositional data as it is the natural parameterization under this setting. For the case of general multivariate data, we use a novel transformation of the data into the simplex (see Section \ref{mcd}) that preserves desirable statistical properties for efficient change point detection.
\item We follow the parametric likelihood approach as its underlying probabilistic framework is more robust to the scale of data  (our test statistic is based on the likelihood of data rather than Euclidean distance metrics on the data itself).
\item Our algorithm is fairly out-of-the-box and works reasonably without much parameter tuning.
\end{itemize}
 
Through extensive experimentation, we show ODCP to be the most viable change point detection method for compositional data.  Also, for general multivariate data our algorithm performs very well. In fact, it significantly outperforms the state of the art algorithms when the changes are predominantly in the variance while being competitive in other settings.
 
The rest of the paper is organized as follows. Section \ref{dm} describes our approach and presents the change point algorithm for compositional data. In Section \ref{gmv}, we extend our approach to general multivariate data. Numerical results on real world and synthetic data are presented in Section \ref{expCD}. Finally, Section \ref{conc} provides the concluding remarks.

\section{ Change Point Detection on Compositional Data}\label{dm}
In this section, we first describe the change point problem and then provide details of our solution for detecting multiple change points. 

\subsection{Change Point Problem}\label{cpp}
We assume the sequence of observations $\{\bfx_1,\bfx_2,\ldots,\bfx_T\}$, where
$\bfx_i \in \mathbb{R}^d, ~1\leq i\leq T$, could be partitioned into $k$ non-overlapping segments.  Within each partition, we assume the data is i.i.d coming from some underlying distribution with unknown parameters. The time instances  at which these delineations happen are called the change points. We denote these $k-1$ change points as $(\tau_1, \tau_2, \ldots, \tau_{k-1})$ where $0=\tau_0 < \tau_1 < \ldots < \tau_{k-1}$ . The goal in the change point problem is to determine the number of different segments in the data ($k$) as well as the locations of the  $k-1$ change points. 

\subsection{Our Solution Overview}\label{ov}


\begin{itemize}[leftmargin=*]
\item We model the data as coming from parameterized Dirichlet family of distributions (see Section \ref{pdm}). With this parameterization, the change points correspond to time instances with abrupt changes in Dirichlet parameters across them. 
\item We identify multiple change points by sequentially performing single change point detection. For this we maintain a window of data with the property that it contains atmost one change point. 
\item For single change point detection we use the standard hypothesis testing framework based on log-likelihood test statistic. 
\item We determine the likelihoods of data (see  \eqref{LLtau}, \eqref{LL0}) based on the maximum likelihood estimates of Dirichlet parameters \cite{minka2000estimating} 
\item We test the significance of the detected change point based on a random subset test (an adaptation of permutation test). 
\end{itemize}

\subsection{Data Modeling: Parametric Dirichlet model} \label{pdm}
Dirichlet distribution of order $d$, $d\geq 2$ with parameters $\bfalpha =
(\alpha_1,\ldots, \alpha_d)$ has probability distribution function given by
$$ p(\bfx | \bfalpha) = \frac{1}{B(\bfalpha)}\prod_{i=1}^d x_i^{\alpha_i-1},$$ where $$B(\bfalpha) = \frac{\prod_{i=1}^d \Gamma(\alpha_i)}{\Gamma(\sum_{i=1}^d \alpha_i)}$$
and $\bfx = (x_1,\ldots,x_d)$ with $x_i\geq 0$ and $\sum_{i=1}^d x_i = 1$.
As can be seen, the support of a dirichlet distribution of order $d$ is the
$(d-1)$-simplex $\Delta^{d-1} := \{\bfx\in \mathbb{R}^d: \sum_{i=1}^d x_i =
1,\, x_i\geq 0 \text{ for all } i=1,\ldots,d\}$.

For change point detection in compositional data (each
point lies on the $(d-1)$-simplex), we model the data to be generated from
family of dirichlet distributions with parameters $\bfalpha^{(1)},\ldots,
\bfalpha^{(k)}$. For $r=1,\ldots, k$, the samples ${\bfx_i},~\tau_{r-1} < i\leq \tau_r$,
are assumed to follow dirichlet distribution with parameters given by the
vector $\bfalpha^{(r)}$.  


\subsection{Multiple to Single Change Point Problem}\label{mts}
Having modeled our data, we approach the problem of identifying multiple
unknown change points by performing a sequence of single change point
detections. In our algorithm, we maintain an active window $\mathcal{A}$ of
observations on which we perform single change point detection. We initialize
the active window $\mathcal{A }$ to $\{\bfx_1,\ldots,\bfx_{I}\}$.
The initial size $I$ could be specified by the user based on the total number
of observations, or using prior knowledge of the process such that there is atmost one change point in any window of size $I$.
On this active window of data $\mathcal{A}$, we run our single change point
detection algorithm. If we discover any change point say at $\tau$, we prune
(discard) the data before the discovered change point and reset the active
data window to size $I$ starting from the discovered change point, i.e.,
$\mathcal{A}= \{\bfx_{\tau}, \ldots,\bfx_{\tau + I}\}$. On the other hand, if  we do
not discover any change point, we append new observations in batches to the active window, i.e., $\mathcal{A }=\mathcal{A} \cup
\mathcal{N}$ where $\mathcal{N} = \{\bfx_{I+1},\ldots, \bfx_{I+b}\}$ (data of batch size $b \geq 1$). We maintain the
ordering in the data even though we depict it by union operation. Note that for
$b=1$ our algorithm runs in completely online manner. 

 \subsection{Identifying single candidate change point}\label{scp}
We now describe the hypothesis testing framework  to detect a single change
point for the active window of observations $\mathcal{A}$. For ease of
exposition, we ignore the true indices in $\mathcal{A}$ and assume that
$\mathcal{A}=\{\bfx_i, ~i = 1,\ldots, t\}$. We test the null hypothesis $H_0$
(there is no changepoint) against the alternate hypotheses $H_\tau$ (there is a
changepoint at $\tau$) where $1 < \tau < t$:
 \begin{itemize} [leftmargin=*]
\item  $H_0:$ The data $\mathcal{A}$ comes from a single Dirichlet distribution.
 \item  $H_{\tau}$: The data $\mathcal{A}$ comes from two Dirichlet distributions delineated at some $\tau$ where $1 < \tau < t$.
 \end{itemize}
Under hypothesis $H_0,$  we obtain the best parameter $\bfeta_0$ that explains
the data coming from single Dirichlet distribution by performing the maximum
likelihood estimation (MLE, see \cite{minka2000estimating}) on the complete data
$\mathcal{A}$. For each $\tau$, under the alternative hypothesis $H_\tau$,  we
perform two MLE estimations for the observations denoted as left data
$\mathcal{L}=\{\bfx_i, ~1\leq i\leq \tau\}$ and the observations
denoted as right data  $\mathcal{R}=\{\bfx_i,  \tau+1\leq i \leq t\}$. We then obtain the Dirichlet MLE estimates $\bfeta_{\mathcal{L}}^{\tau}$
and $\bfeta_{\mathcal{R}}^{\tau}$ for the left and right data
respectively. Now, for each $\tau$, we compute the log likelihood $LL_\tau$ (see \eqref{LLtau}) of the data
$\mathcal{A}$ under hypothesis $H_\tau$. Finally we choose $\tau^*$ as the
value of $\tau$ that maximizes the log likelihood of the data, i.e., 
\begin{eqnarray} \label{tauStar}
\tau^* &= \arg\max_{\tau} LL_{\tau}
\end{eqnarray}
where $LL_{\tau}$ is given by:
\begin{equation}\label{LLtau}
\resizebox{0.86\columnwidth}{!}
{
$LL_{\tau}  =  \ln ~p(\bfx_1, \ldots, \bfx_{\tau} \mid \bfeta_{\mathcal{L}}^{\tau}) + \ln ~p(\bfx_{\tau +1}, \ldots, \bfx_t \mid \bfeta_{\mathcal{R}}^{\tau} )$ 
}
\end{equation}
The log likelihood of the data under $H_{\tau^*}$ is then given by $LL_{\tau^*}$. Similarly, we compute the log likelihood of the data under $H_{0}$ as
\begin{eqnarray}\label{LL0}
LL_{0} &=  \ln ~p(\bfx_{1}, \ldots, \bfx_{t} \mid \bfeta_{\mathcal{A}} ) 
\end{eqnarray}
After determining the log likelihoods under $H_{\tau^*}$ and $H_0$ based on
equations \eqref{LLtau} and \eqref{LL0}, we decide if there is a change point
at $\tau^*$  by considering the log-likelihood ratio $Z^* := LL_{\tau*}-LL_0$ and deciding if $Z^\ast$ is statistically significant. For this, we propose a random subset test (a faster version of the statistical permutation test). 

\subsection{Significance test}\label{ssb:sf}
Having obtained the test statistic $Z^\ast$ for active window ${\cal A}$, we
need to determine its significance. We follow the commonly used permutation
test to determine the significance of the candidate test statistic $Z^\ast$.  
For this, we consider $M$ random permutations ${\cal A}_1,\ldots, {\cal A}_M$ of the active observation
window $\mathcal{A}$ and for each $i=1,\ldots,M$, compute the
test statistic $Z^\ast_i$ (as described in Section \ref{scp}) for each 
${\cal A}_i$. For a chosen significance level $\alpha$, we
reject the null hypothesis $H_0$ if the value $Z^\ast$ lies within the top
$\alpha$ fraction of the values $\{Z^\ast_i: i=1,\ldots,M\}$. Typically $\alpha$
is chosen to be a small number like $0.05$. 

 
Note that by considering finite number of permutations, the fraction
$\frac{|\{i:Z_i^* \geq Z^*\}|}{M}$ gives only an approximate p-value. In order
to get exact p-value, one needs to consider all permutations of ${\cal A}$,
however, this is computationally infeasible. Even considering only finite number of permutations to test the significance of a change point is a computational burden as this leads to computational complexity of O($|\mathcal{A}|$*M) each time we do the significance test.  We observe that,  under the independent samples assumption, the computation of the likelihood ratio is invariant to permutations of left and right data considered in the computation of $LL_{\tau}$ (see \eqref{LLtau}). Thus, instead of considering the random permutations and then partitioning the data as left and right data, we directly consider random subsets of the data as left partition and their complements in ${\cal A}$ as corresponding right partitions. 
\section{Change Point Detection On General Multivariate Data}\label{gmv}

In this section, we describe how our algorithm ODCP could be leveraged for non-compositional data. 
\subsection{Our Solution Overview}
\begin{itemize}[leftmargin=*]
\item We map the general multivariate data to compositional data using transformations that preserve the likelihood based test statistic, permutation distribution of the test statistic and the KL divergence between the distributions separated by change points (see Lemmas \ref{kl}, \ref{llr} ).
\item The properties of the transformations allow us to justifiably determine change points on the transformed compositional data.
\end{itemize} 

\subsection{Mapping to Compositional Data}\label{mcd}
Let $\mathcal{Y}=\{\bfy_1,\bfy_2,\ldots,\bfy_T\}$, $\bfy_i \in \mathbb{R}^d$ for  $1\leq i \leq T$ be general multivariate data. We transform $\mathcal{Y}$ as follows:
In the first step, we mean shift and scale by the standard deviation using the map ${\bf f}^1 :  \mathbb{R}^d \rightarrow \mathbb{R}^d$ as ${\bf f}^1(\bfy) = \frac{\bfy-\bm{\mu}}{\bm{\sigma}}$ where $\bm{\mu}$ and $\bm{\sigma}$ denote to the vector of component-wise mean and standard deviation of the data $\mathcal{Y}$ respectively.  Here the division of vectors is performed component-wise. In the next step, we map the mean centered and scaled data into the simplex by using multi-dimensional expit function (inverse of multinomial logit function) ${\bf f}^2 :  \mathbb{R}^d \rightarrow interior(\Delta^d)$ given by:
\begin{equation}\label{invLogit}
\resizebox{0.86\columnwidth}{!}
 {
 ${\bf f}^2(\bfy) = \left[{\frac {e^{y_1}}{1+\sum _{i=1}^{d}e^{y_i}}},\dots ,{\frac {e^{y_d}}{1+\sum _{i=1}^{d}e^{y_i}}},{\frac {1}{1+\sum _{i=1}^{d}e^{y_i}}}\right]^{\top }$
 }
 \end{equation}
Let $\mathcal{X}=\{\bfx_1,\bfx_2,\ldots,\bfx_T\}$ where $\bfx_i = f^2(f^1(\bfy_i)), ~ i \in \{1, 2, \ldots, T\}$. Note the transformed data $\mathcal{X}$ is compositional as it lies on the simplex. 

\subsection{Justification for our Transformation}
In this section, we consider the single change point detection problem for the given data $\mathcal{Y}$ and prove the invariance of statistical properties of $\mathcal{Y}$ under the transformations. This conveniently enables us to work on the transformed data $\mathcal{X}$.  We now begin with the assumption that the data $\mathcal{Y}$ is generated by either of the following process:
\begin{itemize}[leftmargin=*]
\item  $H_0:$  $\{\bfy_1,\ldots,\bfy_T\}$ are i.i.d samples from distribution with probability density $p_0$.
 \item  $H_{\tau}$: $\{\bfy_1,\ldots,\bfy_\tau\}$ are i.i.d samples from distribution with density $p_1^{\tau}$ while $\{\bfy_{\tau+1},\ldots,\bfy_T\}$ are i.i.d samples from distribution with density $p_2^{\tau}$. 
 \end{itemize}
Under the hypothesis $H_{\tau}$, let $p_{m}^{\tau}$ denote the mixture distribution on data $\mathcal{Y}$ arising from the $\tau$ left samples from $p_1^{\tau}$ and $T - \tau$ right samples from $p_2^{\tau}$.

Let $ \textbf{h} = {\bf f}^2 \circ {\bf f}^1 : \mathcal{Y} \subset \mathbb{R}^d \rightarrow \mathcal{X} \subset interior(\Delta^d)$ denote the composed forward transformation and $\textbf{g} =\textbf{h}^{-1} : \mathcal{X} \rightarrow \mathcal{Y}$  denote its inverse transformation
and let $\textbf{J} = \frac{\partial (y_1,\ldots, y_d)}{\partial (x_1,\ldots, x_d)} $ denote the Jacobian of the inverse transformation.


Now the following holds true for data ${\mathcal X}$ under respective hypothesis:
\begin{itemize}[leftmargin=*]
\item  $H_0:$  $\{\bfx_1,\ldots, \bfx_T\}$  are  i.i.d samples from distribution with probability density $q_0$ where 
\begin{eqnarray}\label{q0}
q_0(\bfx) = p_0(\textbf{g}(\bfx)) | {\rm det}\,  \textbf{J}(\bfx)|
\end{eqnarray}
 \item  $H_{\tau}$: $\{\bfx_1,\ldots,\bfx_\tau\}$ are i.i.d samples from distribution with probability density $q_1^{\tau}$ where
 \begin{eqnarray}\label{q1}
  q_1^{\tau}(\bfx) = p_1^{\tau}(\textbf{g}(\bfx)) | {\rm det}\,  \textbf{J}(\bfx)| 
 \end{eqnarray}
  while $\{\bfx_{\tau+1},\ldots,\bfx_T\}$ are i.i.d samples from distribution with probability density $q_2^{\tau}$ where
 \begin{eqnarray}\label{q2}
   q_2^{\tau}(\bfx) = p_2^{\tau}(\textbf{g}(\bfx)) | {\rm det}\,  \textbf{J}(\bfx)| 
  \end{eqnarray} 
\end{itemize}
Under the hypothesis $H_{\tau}$, let $q_{m}^{\tau}$ denote the mixture distribution on data $\mathcal{X}$ arising from the left $\tau$ samples from $q_1^{\tau}$ and right $T - \tau$ samples from $q_2^{\tau}$ 
\begin{eqnarray}\label{qm}
 q_m^{\tau}(\bfx) = p_m^{\tau}(\textbf{g}(\bfx)) | {\rm det}\,  \textbf{J}(\bfx)| 
 \end{eqnarray}
We now have the following lemma:

\begin{lemma}\label{kl}
KL($p_m^{\tau}||p_0)$ = KL($q_m^{\tau}||q_0)$
\end{lemma}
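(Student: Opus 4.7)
The plan is to recognize Lemma \ref{kl} as the standard invariance of Kullback--Leibler divergence under invertible (diffeomorphic) transformations of the underlying random variable, and simply verify it by unwinding the definition with a change of variables through $\textbf{h}$ and $\textbf{g}$.

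First I would write down the definition of the KL divergence on the $\bfx$-side as an integral over $\mathcal{X} \subset \mathrm{interior}(\Delta^d)$:
\begin{equation*}
\mathrm{KL}(q_m^{\tau}\|q_0) = \int_{\mathcal{X}} q_m^{\tau}(\bfx)\,\ln\frac{q_m^{\tau}(\bfx)}{q_0(\bfx)}\,d\bfx.
\end{equation*}
Then I would substitute the explicit forms of $q_m^{\tau}$ and $q_0$ given in equations \eqref{qm} and \eqref{q0}. Inside the logarithm, the Jacobian factor $|\det \textbf{J}(\bfx)|$ appears in both numerator and denominator and cancels, leaving $\ln\bigl(p_m^{\tau}(\textbf{g}(\bfx))/p_0(\textbf{g}(\bfx))\bigr)$. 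Outside the logarithm, the integrand becomes $p_m^{\tau}(\textbf{g}(\bfx))\,|\det \textbf{J}(\bfx)|$ times this log-ratio.

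The final step is the change of variables $\bfy = \textbf{g}(\bfx)$. Since $\textbf{g} = \textbf{h}^{-1}$ is a bijection between $\mathcal{X}$ and $\mathcal{Y}$ with Jacobian $\textbf{J}$, the volume element transforms as $d\bfy = |\det \textbf{J}(\bfx)|\,d\bfx$, so the integral becomes
\begin{equation*}
\int_{\mathcal{Y}} p_m^{\tau}(\bfy)\,\ln\frac{p_m^{\tau}(\bfy)}{p_0(\bfy)}\,d\bfy = \mathrm{KL}(p_m^{\tau}\|p_0),
\end{equation*}
which is exactly the claim.

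There is no real obstacle here; the content of the lemma is that KL divergence depends only on the underlying probability measure and not on the coordinate system, and the proof only needs a careful bookkeeping of the Jacobian $|\det \textbf{J}|$. The only thing worth being slightly careful about is confirming that $\textbf{h} = \textbf{f}^2 \circ \textbf{f}^1$ is indeed a diffeomorphism between $\mathbb{R}^d$ and $\mathrm{interior}(\Delta^d)$ (so that the change of variables is valid on the full supports and no boundary contributions arise); this follows because $\textbf{f}^1$ is an affine bijection of $\mathbb{R}^d$ and $\textbf{f}^2$ (the multi-dimensional expit) is a smooth bijection between $\mathbb{R}^d$ and $\mathrm{interior}(\Delta^d)$ with smooth inverse (the multinomial logit).
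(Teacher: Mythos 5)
Your proof is correct and is essentially the same as the paper's: both are a one-line change of variables $\bfy=\textbf{g}(\bfx)$ in the KL integral, using the cancellation of the Jacobian factor inside the logarithm; you merely run the computation from the $\bfx$-side to the $\bfy$-side rather than the reverse, and add a (welcome but routine) check that $\textbf{h}$ is a diffeomorphism onto $\mathrm{interior}(\Delta^d)$.
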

 \begin{proof}
 \begin{eqnarray*}
KL(p_m^{\tau}||p_0) & = & \int p_m^{\tau}(\bfy) \ln \frac{p_m^{\tau}(\bfy)}{p_0(\bfy)} d\bfy \nonumber \\
&=& \int p_m^{\tau}(\textbf{g}(\bfx)) \ln \frac{p_m^{\tau}(\textbf{g}(\bfx))}{p_0(\textbf{g}(\bfx))} |{\rm det}\,  \textbf{J}(\bfx)| d\bfx \nonumber  \\
& & (\text{change of variables with} ~\bfy=\textbf{g}(\bfx)) \nonumber \\
&=& \int q_m^{\tau}(\bfx) \ln \frac{q_m^{\tau}(\bfx)}{q_0(\bfx)} d\bfx \nonumber \\
&=& KL(q_m^{\tau}||q_0) \nonumber  
\end{eqnarray*}
 \end{proof}
Lemma \ref{kl} essentially states the KL distance between the mixture distribution under $H_{\tau}$ and the distribution under $H_0$ is preserved  by the transformation $\textbf{h}$. Under the likelihood framework, we have a stronger result that the transformation $\textbf{h}$ preserves positions of change points in addition to the KL distances. For this, let us define the log-likelihood ratio $LLR^{\tau}$ for data $\mathcal{Y}$ with possible change point at $\tau$ as
\begin{eqnarray}
LLR^{\tau}(\mathcal{Y}) = \ln \frac{p(\mathcal{Y} | H_{\tau})}{p(\mathcal{Y} | H_0)}
\end{eqnarray}
 \begin{lemma}\label{llr}
$LLR^{\tau}(\mathcal{Y}) = LLR^{\tau}(\mathcal{X})$
\end{lemma}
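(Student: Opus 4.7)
The plan is to expand each likelihood in the ratio $LLR^{\tau}(\mathcal{Y})$ as a product of per-sample densities, use the change-of-variables identities \eqref{q0}, \eqref{q1}, \eqref{q2} to rewrite each factor in terms of the corresponding density on $\mathcal{X}$ together with a Jacobian factor, and then observe that the product of Jacobian factors is common to both the numerator and the denominator and hence cancels.

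Concretely, I would first write
\begin{eqnarray*}
p(\mathcal{Y} \mid H_\tau) &=& \prod_{i=1}^{\tau} p_1^{\tau}(\bfy_i) \prod_{i=\tau+1}^{T} p_2^{\tau}(\bfy_i), \\
p(\mathcal{Y} \mid H_0) &=& \prod_{i=1}^{T} p_0(\bfy_i),
\end{eqnarray*}
using the i.i.d.\ assumptions on the two hypotheses. Since $\bfy_i = \textbf{g}(\bfx_i)$ and $\textbf{h} = \textbf{g}^{-1}$ is a bijection with Jacobian $\textbf{J}(\bfx_i)$ of the inverse map, equations \eqref{q0}--\eqref{q2} give $p_0(\bfy_i) = q_0(\bfx_i)/|\det\textbf{J}(\bfx_i)|$ and similarly for $p_1^{\tau}, p_2^{\tau}$. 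Substituting these into both likelihoods, I pull out the common factor $\prod_{i=1}^{T} |\det\textbf{J}(\bfx_i)|^{-1}$ from numerator and denominator.

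After this factorization, the numerator becomes $\bigl[\prod_{i=1}^T |\det\textbf{J}(\bfx_i)|^{-1}\bigr] \cdot p(\mathcal{X} \mid H_\tau)$ and the denominator becomes $\bigl[\prod_{i=1}^T |\det\textbf{J}(\bfx_i)|^{-1}\bigr] \cdot p(\mathcal{X} \mid H_0)$, where I identify the remaining products with $p(\mathcal{X} \mid H_\tau)$ and $p(\mathcal{X} \mid H_0)$ in view of \eqref{q1}, \eqref{q2}, \eqref{q0}. Taking the log ratio cancels the Jacobian products and yields $LLR^{\tau}(\mathcal{X})$ as desired.

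There is no real obstacle here: the statement is essentially the invariance of likelihood ratios under smooth bijective reparameterization, and the only mild care needed is to note that the same Jacobian factor appears once per sample in both hypotheses (since both $H_0$ and $H_\tau$ describe distributions on the same $T$ observations), so the cancellation is exact regardless of $\tau$. The same argument would go through for any invertible differentiable map in place of $\textbf{h}$, which is why our specific composition ${\bf f}^2 \circ {\bf f}^1$ is justified.
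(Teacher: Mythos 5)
Your proof is correct and follows essentially the same route as the paper's: factor both likelihoods into per-sample densities, substitute the change-of-variables identities \eqref{q0}--\eqref{q2}, and cancel the common product of Jacobian factors $\prod_{i=1}^{T}|\det\textbf{J}(\bfx_i)|$ between numerator and denominator. You simply spell out the cancellation that the paper compresses into the remark ``easily follows from \eqref{q0}, \eqref{q1}, \eqref{q2} and \eqref{qm}.''
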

 \begin{proof}
 \begin{eqnarray}
LLR^{\tau}(\mathcal{Y}) &=& \ln\Big(\frac{\prod_{i=1}^{\tau} p_1^{\tau}(\bfy_i) \prod_{i=\tau+1}^{T} p_2^{\tau}(\bfy_i)}{\prod_{i=1}^{T}{p_0^{\tau}(\bfy)}}\Big) \nonumber \\
&=& \ln\Big(\frac{\prod_{i=1}^{\tau} q_1^{\tau}(\bfx_i) \prod_{i=\tau+1}^{T} q_2^{\tau}(\bfx_i)}{\prod_{i=1}^{T}{q_0^{\tau}(\bfx_i)}}\Big) \nonumber \\
& & \resizebox{0.7\columnwidth}{!}{($\text{easily follows from} ~\eqref{q0}, \eqref{q1}, \eqref{q2} ~\text{and} ~\eqref{qm}) \nonumber $} \\
&=& LLR^{\tau}(\mathcal{X}) \nonumber  
\end{eqnarray}
 \end{proof}

Lemma \ref{llr} establishes the log-likeihood ratios are preserved under the transformation $\textbf{h}$.  Moreover, as the permutation test that determines statistical significance of the selected change point also depends on the log-likelihood ratios of the permuted data, the statistical testing is also invariant under the transformation. With these statistical invariants, it is thus sufficient to perform log-likelihood based change point detection on data $\mathcal{X}$ instead on $\mathcal{Y}$.  This offers us significant advantage as we have flexibility to model the distribution on the simplex using Dirichlet family.
The Dirichlet parameterization has other advantages as follows:
\begin{itemize}[leftmargin=*]
\item Dirichlet Parameter estimation is a convex optimization problem leading to reliable and fast MLE estimates.
\item The number of parameters to be estimated is linear in dimension of the data unlike quadratic in the case of Gaussian parameter estimation.
\item Although the number of parameters to be estimated are less, they still capture the interaction between different component variables (as each component of the mean of Dirichlet involves all other Dirichlet parameters).
\end{itemize}  

\section{Experiments}\label{expCD}
In this section, we perform extensive simulations comparing our algorithm ODCP against the state of the art algorithms. We divide the set of our experiments into two parts based on the kind of data considered. Our first set of experiments are on compositional data (Section  \ref{comp}). Here we show the limitations of existing multivariate change point methods to identify change points and simultaneously show the efficacy of our approach on varied range of compositional data. In the subsequent set of experiments, we test the extension of our approach on non-compositional data (Section \ref{noncomp}). In both the compositional and non-compositional data settings that we considered, we perform experiments on synthetic and real datasets.

\subsection{Preliminary Experiments}\label{prelim}
Although there are numerous change point algorithms in literature, for empirical evaluation we  could only consider algorithms that have available implementations. Further, only few of the available implementations are applicable to multivariate change point detection. In this experiment, we determine viable/robust change point algorithms based on the datasets  provided in \cite{erdman2007bcp}, \cite{matteson2014nonparametric} and \cite{wang2018high}. Based on the performance of different algorithms on these datasets, we shortlist the most promising algorithms for further extensive comparison in later sections. We consider five datasets $\mathcal{S}_{1}, \ldots, \mathcal{S}_{5}$. We now briefly describe them (see \cite{james2013ecp} \cite{erdman2007bcp} for more details).

\noindent {\it Change in mean and variance (${\mathcal S}_{1}$)}:  This is a univariate data where 100 independent samples were generated from the following normal distributions: $\mathcal{N}(0,1), \mathcal{N}(0,\sqrt{3}), \mathcal{N}(2,1), \mathcal{N}(2,2)$. This data contains three change points. The first is due to change in only variance, second is due to change in both mean and variance and the third is due to change in only variance. 

\noindent {\it Change in Covariance (${\mathcal S}_{2}$)}: This is a multivariate data with dimension $d=3$ where 250 independent samples were generated from the following normal distributions: $\mathcal{N}(0,\Sigma_1), \mathcal{N}(0,\Sigma_2), \mathcal{N}(0,\Sigma_1)$, where 
\begin{equation*}
\Sigma_1 = \begin{pmatrix}
1 & 0 & 0 \\
0 & 1 & 0 \\
0 & 0 & 1
\end{pmatrix} \quad
\Sigma_2 = \begin{pmatrix}
1 & 0.9 & 0.9 \\
0.9 & 1 & 0.9 \\
0.9 & 0.9 & 1
\end{pmatrix}
\end{equation*}
This data has two change points and both are due to only change in covariance. \smallskip

\noindent {\it Change in tail behaviour (${\mathcal S}_{3}$)}: This data has two change points and both are due to change in tail behavior. Here, data points are drawn from a bivariate normal distribution and a bivariate t distribution with 2 degrees of freedom (see \cite{matteson2014nonparametric} for more details). \smallskip

\noindent {\it Change in Poisson parameter intensity (${\mathcal S}_{4}$)}: This data has three change points corresponding to change in the intensity parameter of a Poisson process (see \cite{matteson2014nonparametric} for more details). \smallskip

\noindent {\it Sparse Change in Mean with Overlap (${\mathcal S}_{5}$)}: This is a multivariate data with dimension $d=50$ with sparse change in mean in some of the components (adaptation of  example data from \cite{wang2018high})\smallskip

\noindent \textbf{Results for Preliminary Experiment}:  In Table \ref{genData}, we compare ODCP with other available state of the art change point methods namely ECP (E-Divisive from \cite{matteson2014nonparametric}), Bayesian Change Point Detection (BCP) \cite{adams2007bayesian}, Double CUSUM \cite{cho2016change}  and InspectChangePoint  (InsCP) \cite{wang2018high} on datasets $\mathcal{S}_{i}, ~ i \in \{1,\ldots,5\}$. From Table \ref{genData}, we observe that only ODCP, ECP and InsCP are able to effectively identify change points.  However, other algorithms are not as competitive. Therefore, we compare ODCP with two most promising methods, namely,  ECP and InsCP for further evaluation over varied datasets in subsequent sections.
\begin{table}[h]
\begin{center}
\begin{tabular}{|c|c|c|c|c|c|c|}
\hline
Dataset & ODCP & ECP & BCP & DCSUM & InsCP\\ \hline
${\mathcal S}_{1}$ &  3/3, 0 & 3/3, 0  & 2/3, 1  & 1/3, 0 & 1/3, 0\\
${\mathcal S}_{2}$ &  2/2, 0 & 2/2, 0 & 0/2, 2 & $-$ & 0/2,0\\
${\mathcal S}_{3}$ &  2/2, 0 & 2/2, 0 & 0/3, 3 & 0/3, 1 & 0/2,0\\
${\mathcal S}_{4}$ &  3/3, 0 & 3/3, 0 & 1/3, 0 & $-$ & 3/3,0\\
${\mathcal S}_{5}$ &  2/3, 0 & 3/3, 0 & 0/3, 0 & 0/3, 0 & 3/3,0\\
\hline
\end{tabular}
\caption{\small{General data available from literature. Tuple Entries $a/b,c$ denote $a$ out of $b$ true change points were detected by the algorithm while there were $c$ false positives. $-$ entry denotes that the algorithm failed to yield results}}\label{genData}
\end{center}
\end{table}

\subsection{Compositional Data}\label{comp}
\subsubsection{\bf Experimental Setup} \label{expSetupComp}
In the following set of experiments, we consider a single change point detection problem on 10-dimensional data ($d =10$) with total of 1000 observations ($T=1000$). We set the location of the change point  at $\tau=500$, i.e., we consider two segments of length $500$ where the data within each segment is homogeneously generated from suitable distribution. Based on the chosen distribution, we generate four different datasets (compositional data) as described below:\smallskip

\noindent{\it Dirchlet  Data (${\cal D}_1$)}: This dataset is generated by sampling from Dirichlet distribution. Different Dirichlet parameters are chosen for each segment. These parameters are chosen such that the symmetric KL distance between the distribution across segments is about 0.5 (this intuitively captures the ``closeness'' of the two segments). \smallskip

\noindent{\it Dirchlet Mixture Data (${\cal D}_2$)}: This dataset is generated by sampling from Dirichlet mixture distribution with three components.  In both the segments, we keep the mixture proportion constant at $(0.3,0.4,0.3)$ while between the segments the parameter of the component Dirichlet distributions are perturbed slightly. \smallskip

\noindent{\it Gaussian Normalized Data (${\cal D}_3$)}: To generate this dataset, we first generate data from 10-dimensional Gaussian distributions. Between segments, slightly different parameters are chosen. After sampling Gaussian data, we finally apply the transformation ${\bf x}\rightarrow {\bf x}/{|{\bf x}|}_1$ to this dataset to obtain compositional data.\smallskip

\noindent{\it Logistic Normal Data (${\cal D}_4$)}: To generate this dataset, we first generate data from 10-dimensional Gaussian distribution (as in previous case). The segments are differentiated by choosing slightly different parameters for each of them. Subsequently, we obtain compositional data by projecting this dataset onto the simplex by applying inverse multinomial logit transformation.

\subsubsection{\bf Description of Comparison Algorithms}
In this section, we briefly describe the algorithms that we compare against ODCP. The candidate algorithms are chosen based on our preliminary experiments described in Section \ref{prelim}.
\begin{itemize}
\item ECP is a non-parametric algorithm that uses a classical divergence measure between two distribution based on characteristic functions and the energy statistic \cite{matteson2014nonparametric}. The implementation of the algorithm is available through {\tt ecp} package in R.
\item InsCP is also a non-parametric algorithm that detects sparse changes in the mean structure in high dimensional data \cite{wang2018high}. The implementation of the algorithm is available through {\tt InspectChangePoint} package in R.
\end{itemize}

\subsubsection{\bf Performance Metrics}
Labelling a data point as change point is said to be correct if it lies within the predefined window size $W$ of the actual change point.  Based on this criteria, for each of the algorithms, we compute the standard precision and recall metric. The default window size for evaluation is set to $4\%$ of segment length (typically, 500 in our experiments).  We also compare the performance of the algorithms for different window sizes using precision and recall curves (see Figures \ref{meanSparsityComp}, \ref{varianceSparsityComp} in Section \ref{sparseData}). 
For each of the experiments, the performance metric is obtained by averaging over 20 Monte-Carlo simulations, i.e., we sample 20 datasets randomly for performing Monte-Carlo averaging.
\subsubsection{\bf Discussion of Results} 
In this section, we present the results for the three algorithms on each of the four synthetically generated compositional datasets described in Section \ref{expSetupComp}.\medskip

\noindent{\it Dirchlet  Data (${\cal D}_1$)}: From first row of Table \ref{DirMIx}, it can be seen that ODCP effectively detects change points when the data comes from Dirichlet family. This is to be expected as we perform Dirichlet parameter estimations in our algorithm. \smallskip

\noindent{\it Dirchlet Mixture Data (${\cal D}_2$)}: The effectiveness of ODCP degrades in distinguishing Dirichlet mixture distribution (see second row of Table \ref{DirMIx}). On the other hand, other algorithms' performance is poor in both the settings.\smallskip 
\begin{table}[h]
\begin{center}
\resizebox{0.8\columnwidth}{!}{
\begin{tabular}{|c|c|c|c|c|c|c|}\hline
Datasets &  \multicolumn{2}{|c|}{ODCP} & \multicolumn{2}{|c|}{ECP} &    \multicolumn{2}{|c|}{InsCP} \\
   \cline{2-7}
   & Prec. & Rec. & Prec. & Rec. & Prec. & Rec \\
   \cline{1-7}
$\mathcal{D}_1$ &   0.8 & 1.0 & 0.5 & 0.25 &1.0 &0.05 \\
\hline
$\mathcal{D}_2 $ &  1.0 & 0.6 & $-$ & 0.0 & $-$ & 0.0\\ \hline
\end{tabular}}
\caption{\small{Performance comparison on Dirichlet $\mathcal{D}_1$ and Mixture Data $\mathcal{D}_2$ with window size 4\%. $-$ indicates the algorithm did not detect any change point.}}\label{DirMIx}
\end{center}
\end{table}

\noindent{\it Gaussian Normalized Data (${\cal D}_3$)}: 
Since, datasets in ${\cal D}_3$ are obtained from Gaussian family, we introduce change points in data by choosing different gaussian parameters for each segment. 
Specifically we consider the following variations:
\begin{itemize}[leftmargin=*]
\item Type of change: change in mean, change in variance.
\item SNR: We consider varying signal to noise ratio from 5 (High SNR, low noise)  to 20  (Low SNR, high noise) for the base signal.
\end{itemize}

The results of the experiments on this dataset are summarised in Table \ref{normalData}. In the case of mean change, ODCP identifies almost all the change points (high recall) under high SNR. Under low SNR, the performance of all the methods degrades.  However, in the case of variance change, ODCP performs significantly better than others irrespective of the noise in the data. \smallskip


\begin{table}[h]
\begin{center}
\resizebox{0.8\columnwidth}{!}{
  \begin{tabular}{|l|l|l|l|l|l|l|l|}
    \hline
Type & SNR &   \multicolumn{2}{|c|}{ODCP} & \multicolumn{2}{|c|}{ECP} &    \multicolumn{2}{|c|}{InsCP} \\
         \cline{3-8}
  &  & Prec. & Rec. & Prec. & Rec. & Prec. & Rec \\
 \hline
\multirow{2}{*}{Mean} & high & 0.89 & 0.85 &  0.65  & 0.81 & 1.0 & 0.64 \\
\cline{2-8}
 &  low & 0.69 & 0.45  & 0.77 &   0.35  & 0.96 & 0.50 \\
 \hline
\multirow{2}{*}{Var} & high & 0.95 & 1.0 & 0.85   & 0.60 & $-$ & $-$ \\
\cline{2-8}
 &  low & 0.81 & 0.65  & 0.78 &  0.55  & $-$ & $-$ \\
 \hline
  \end{tabular}}
  \caption{\small{Performance comparison on variants of normalized Gaussian data $\mathcal{D}_3$} with window size 4\%. '-' indicates that the algorithm is not applicable}\label{normalData}
  \end{center}
\end{table}

\noindent{\it Logistic Normal Data (${\cal D}_4$)}: This dataset differs from $\mathcal{D}_3$ in the kind of transformation that is applied onto the Gaussian data. Similar to dataset $\mathcal{D}_3$,  we consider changes in mean and variance under different SNR.
From Table \ref{logitData}, we observe that ODCP substantially outperforms ECP when there is change only in variance. InsCP performs the best in the case of detecting change in mean. However, ODCP and ECP are also very competitive especially in the case of low noise (High SNR).
\smallskip

\begin{table}[h]
\begin{center}
\resizebox{0.8\columnwidth}{!}{
  \begin{tabular}{|l|l|l|l|l|l|l|l|}
    \hline
Type & SNR &   \multicolumn{2}{|c|}{ODCP} & \multicolumn{2}{|c|}{ECP} &    \multicolumn{2}{|c|}{InsCP} \\
         \cline{3-8}
  &  & Prec. & Rec. & Prec. & Rec. & Prec. & Rec \\
 \hline
\multirow{2}{*}{Mean} & high & 0.90 & 0.95 &  0.88  & 0.80 & 1.0 & 1.0 \\
\cline{2-8}
 & low & 0.89 & 0.85  & 0.82 &   0.7  & 1.0 & 1.0 \\
 \hline
\multirow{2}{*}{Var} & high & 0.95 & 1.0 & 0.85   & 0.81 & $-$ & $-$ \\
\cline{2-8}
 &  low & 0.90 & 1.0  & 0.78 &  0.55  & $-$ & $-$ \\
 \hline
  \end{tabular}}
  \caption{\small{Performance comparison on variants on logistic normal data $\mathcal{D}_4$} with window size 4\%}\label{logitData}
  \end{center}
\end{table}

\noindent{\bf Summary of Results on Synthetic Data}:
\begin{itemize}[leftmargin=*]
\item ODCP consistently performs well overall on all the compositional datasets we have considered. In particular, it outperforms other algorithms in identifying changes in variance.
\item Thus, ODCP appears to be the most viable method for effective change point detection on compositional data.
\end{itemize}

\subsubsection{\bf  Experiments for studying effects of sparsity}\label{sparseData} 
In addition to the earlier experiments, we investigate the effectiveness of ODCP when changes occur in sparse subset of the coordinates. Note that this is a plausible scenario in practice. In this experiment, we generate data with dimension $d=10$ and total number of observations $T=1200$. The data consists of four segments each of length $300$, i.e,, the true change points are located at 300, 600 and 900. Each segment is obtained by sampling from a 10-dimensional Gaussian distribution and transforming to compositional data as in $\mathcal{D}_3$. As before, we introduce changes in mean and variance between different segments but with a specified sparsity level. For sparsity level $s,  ~0  \leq s \leq 1$, we restrict the changes (mean/variance) to randomly chosen $s$ fraction of the coordinates. We consider three specific sparsity levels corresponding to $s=\{0.3,0.5,0.7\}$. For each of the algorithms, we measure the precision and recall metrics for different window sizes $W$ where $ W \in \{0, 1, 2, \ldots, 50\} $. Since the segment length is $300$, this corresponds to detection error tolerance of  $0 \%$ to $16\%$. The precision and recall curves for the three algorithms in detecting changes in mean and variance are depicted in Figures \ref{meanSparsityComp} and \ref{varianceSparsityComp} respectively. In the case of sparse mean change (Figure \ref{meanSparsityComp}), all the three algorithms perform equally well. However, in the case of sparse variance change (Figure \ref{varianceSparsityComp}), ODCP significantly outperforms ECP.  The recall metric for ECP does not exceed 0.6 under all sparsity levels, however, the recall for ODCP significantly improves. 
 Note that the InsCP algorithm is not applicable here as it detects only sparse changes in mean (when we tried InsCP in this experiment, it did not detect any change point). 

\begin{figure}[ht!]
\centering
\includegraphics[width=0.4\textwidth, height=0.4\textwidth]{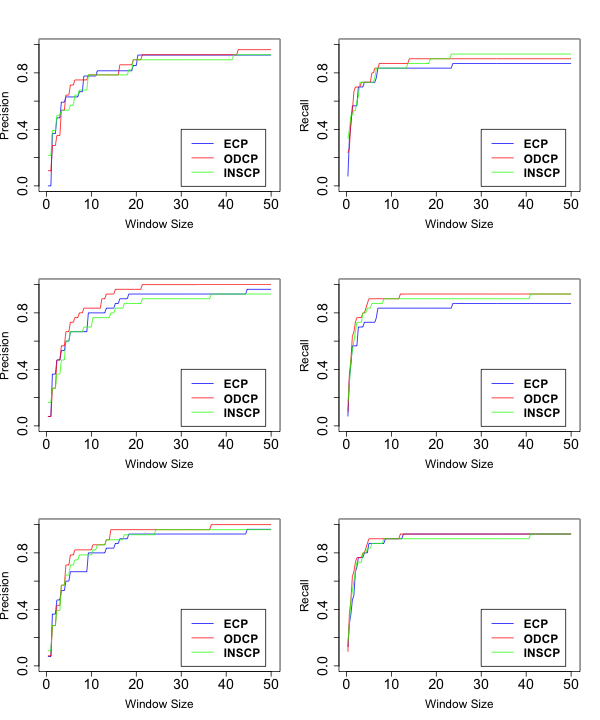}
\caption{\small Effect on performance for sparse mean change on compositional data. Results for sparsity levels of $0.3,0.5$ and $0.7$ are shown in rows first, second and third respectively.}
\label{meanSparsityComp}
\end{figure}

\begin{figure}[ht!]
\centering
\includegraphics[width=0.4\textwidth, height=0.4\textwidth]{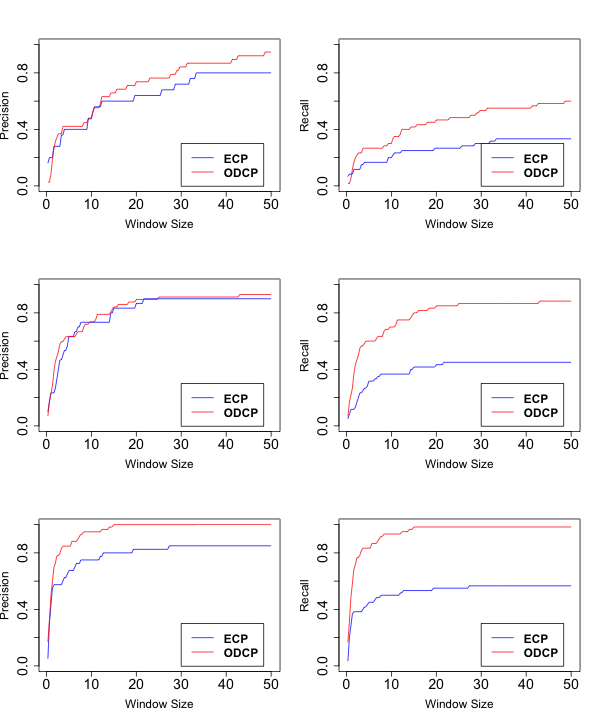}
\caption{\small Effect on performance for sparse variance change on compositional data. Results for sparsity levels of $0.3,0.5$ and $0.7$ are shown in rows first, second and third respectively.}
\label{varianceSparsityComp}
\end{figure}

\subsubsection{\bf Discussion of Results for Real Data}\label{realComp} 
In this section, we use our algorithm to obtain insights about significant process changes in a real world compositional data from an industrial plant. 
The dataset describes the proportion of time spent by an equipment (cement mill) in four different modes of operation, each corresponding to a different energy efficiency band, in a day.  The dataset had records for 230 days. Using our approach, we identified 3 change points. We correlated these change points with the time-series data being measured for the equipment using sensors for different process variables such as table vibration, classifier load, fan speed, etc. Most of the change points lead to key insights in understanding changes in machine operation. In Fig. \ref{cprealdata} we plot the important process variables and clearly indicate the change points identified by ODCP.  The plots show that the change points correlate closely with changes in these important variables.

\begin{figure}[ht!]
\centering
\includegraphics[width=0.4\textwidth,height=0.2\textwidth]{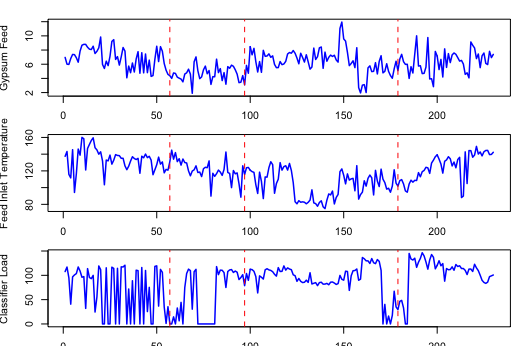}
\caption{\small Correlation of the change points with important process variables for the equipment. First change point correlates with mean shift of gypsum feed and change in variance of classifier load. Second changepoint shows strong correlation with mean shift in gypsum feed and weak correlation with change in feed inlet temperature. Last change point strongly correlates to the increasing trend in feed inlet temperature. }
\label{cprealdata}
\end{figure}

\subsection{Non-Compositional Data}\label{noncomp}
In what follows, we evaluate the performance of ODCP, ECP and InsCP in the general (non-compositional) data setting.

\subsubsection{\bf Experimental setup}
The setup in this experiment is similar to that in the compositional case (see Section \ref{expSetupComp}). We generate non-compositional data as described below:\smallskip

\noindent{\it Gaussian Data (${\cal G}_1$)}: This dataset is generated in a similar manner to datasets $\mathcal{D}_3$ and $\mathcal{D}_4$. However, for this dataset we do not apply  the transformation (normalization or multinomial inverse logit) for converting the  multivariate Gaussian data to compositional data.

\subsubsection {\bf Discussion of Results}
The results for non-compositional setting are tabulated in Table \ref{gData}. 
From the table, we observe that all the three algorithms are equally good in identifying change-points in most scenarios we have considered. ODCP outperforms ECP when there is change only in variance. This is clearly illustrated by precision and recall curves in Figure \ref{varianceSparsityNonComp}. 
\smallskip
 
  \begin{table}[h]
\begin{center}
\resizebox{0.8\columnwidth}{!}{
  \begin{tabular}{|l|l|l|l|l|l|l|l|}
    \hline
Type & SNR &   \multicolumn{2}{|c|}{ODCP} & \multicolumn{2}{|c|}{ECP} &    \multicolumn{2}{|c|}{InsCP} \\
         \cline{3-8}
  &  & Prec. & Rec. & Prec. & Rec. & Prec. & Rec \\
 \hline
\multirow{2}{*}{Mean} & high & 1.0 & 1.0 &  1.0  & 0.83 & 1.0 & 1.0 \\
\cline{2-8}
 & low & 1.0 & 1.0  & 0.83  &  1.0  & 1.0 & 1.0 \\
 \hline
\multirow{2}{*}{Var} & high & 1.0 & 1.0 & 0.8   & 0.81 & $-$ & $-$ \\
\cline{2-8}
 &  low & 0.9 & 0.9  & 0.83 &  0.5  & $-$ & $-$ \\
 \hline
  \end{tabular}}
\caption{\small{Performance comparison on variants on Gaussian data}}\label{gData}
  \end{center}
\end{table}

\begin{figure}[ht!]
\centering
\includegraphics[width=0.45\textwidth]{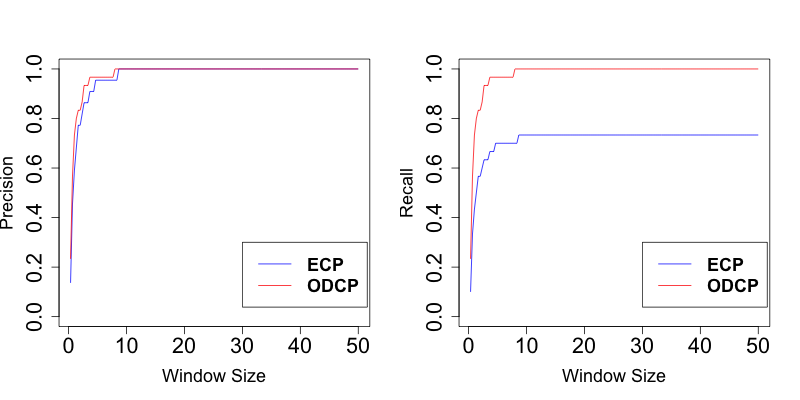}
\caption{\small Effect on performance for variance change on non-compositional data.}
\label{varianceSparsityNonComp}
\end{figure}

\subsubsection{\bf Discussion of Results for Real Data}
In this section, we discuss the performance of ODCP on a Human activity real Dataset $\mathcal{G}_2$ \cite{ermes2008detection}. 
This dataset has 19 activities performed by eight subjects (4 female, 4 male, between the ages 20 and 30) for 5 minutes sampled at 25Hz. Each sample is a 45 dimensional vector and there were totally 7500 samples per subject. The subjects are asked to perform the activities in their own style and were not restricted on how the activities should be performed. For this reason, there are inter-subject variations in the speeds and amplitudes of some activities. The data were recorded from a single subject using five Xsens MTx units attached to the torso, arms and legs.

We generated multivariate data from the underlying dataset, with varying activities and varying number of segments. We consider two experiments using this dataset. In the first experiment ($\mathcal{E}_1$), we randomly chose one of the eight subjects and randomly chose the number of segments, between two and 10 and segment length randomly from 200 to 300 out of 7500 possible samples. Then, for each segment, we chose an activity uniformly at random from among the 19 possible activities. For the second experiment ($\mathcal{E}_2$), we randomly chose an activity from one of the 19 activities. We chose the number of segments and segment length as in $\mathcal{E}_1$. Then, for each segment, we chose a subject performing the chosen activity uniformly at random from among the 8 possible subjects. In both the experimental settings, ODCP detects most of the change points (see Table \ref{hct}).\smallskip

\begin{table}[h]
\begin{center}

\begin{tabular}{|c|c|c|}\hline
Datasets &  \multicolumn{2}{|c|}{ODCP} \\
   \cline{2-3}
   & Prec. & Rec. \\
   \cline{1-3}
$\mathcal{E}_1$ &   0.9 & 0.93 \\
\hline
$\mathcal{E}_2 $ &  0.88 & 1.0 \\ \hline
\end{tabular}
\caption{\small{Performance of ODCP on human activity dataset with window size 4\%}}\label{hct}
\end{center}
\end{table}

\section{Conclusions}\label{conc}
In this paper, we presented an online change point algorithm for multivariate data. Our algorithm ODCP is practical, out-of-the-box and reliable in detecting distributional changes of varied nature (like changes in mean, variance, covariance etc). The experiments on compositional data indicate the flexibility of Dirichlet family in approximating the underlying data distribution.  Furthermore, ODCP is a viable solution for change-point detection in general setting, while being undoubtedly superior in the case of compositional data. The ``nice" properties of the inverse multinomial logit mapping we have considered have seamlessly allowed us to extend the benefits of Dirichlet modeling in identifying change points in compositional data to general scenario. 


%
\bibliographystyle{99}

\begin{thebibliography}{99}
\bibitem{chen2011parametric}
J.~Chen and A.~K. Gupta, \emph{Parametric statistical change point analysis:
  with applications to genetics, medicine, and finance}.\hskip 1em plus 0.5em
  minus 0.4em\relax Springer Science \& Business Media, 2011.


\bibitem{basseville1993detection}
M.~Basseville and I.~V. Nikiforov, \emph{Detection of abrupt changes: theory
  and application}.\hskip 1em plus 0.5em minus 0.4em\relax Prentice Hall
  Englewood Cliffs, 1993, vol. 104.

\bibitem{grunwald1993time}
G.~K. Grunwald, A.~E. Raftery, and P.~Guttorp, ``Time series of continuous
  proportions,'' \emph{Journal of the Royal Statistical Society. Series B
  (Methodological)}, pp. 103--116, 1993.

\bibitem{snyder2017forecasting}
R.~D. Snyder, J.~K. Ord, A.~B. Koehler, K.~R. McLaren, and A.~N. Beaumont,
  ``Forecasting compositional time series: A state space approach,''
  \emph{International Journal of Forecasting}, vol.~33, no.~2, pp. 502--512,
  2017.

\bibitem{killick2012optimal}
R.~Killick, P.~Fearnhead, and I.~A. Eckley, ``Optimal detection of changepoints
  with a linear computational cost,'' \emph{Journal of the American Statistical
  Association}, vol. 107, no. 500, pp. 1590--1598, 2012.


\bibitem{montanez2015inertial}
G.~D. Montanez, S.~Amizadeh, and N.~Laptev, ``Inertial hidden markov models:
  Modeling change in multivariate time series.'' in \emph{AAAI}, 2015, pp.
  1819--1825.

\bibitem{wang2018high}
T.~Wang and R.~J. Samworth, ``High dimensional change point estimation via
  sparse projection,'' \emph{Journal of the Royal Statistical Society: Series B
  (Statistical Methodology)}, vol.~80, no.~1, pp. 57--83, 2018.

\bibitem{chen2015graph}
H.~Chen, N.~Zhang \emph{et~al.}, ``Graph-based change-point detection,''
  \emph{The Annals of Statistics}, vol.~43, no.~1, pp. 139--176, 2015.

\bibitem{kawahara2007change}
Y.~Kawahara, T.~Yairi, and K.~Machida, ``Change-point detection in time-series
  data based on subspace identification,'' in \emph{7th IEEE International
  Conference on Data Mining (ICDM)}, 2007, pp. 559--564.

\bibitem{berkes2009detecting}
I.~Berkes, R.~Gabrys, L.~Horv{\'a}th, and P.~Kokoszka, ``Detecting changes in
  the mean of functional observations,'' \emph{Journal of the Royal Statistical
  Society: Series B (Statistical Methodology)}, vol.~71, no.~5, pp. 927--946,
  2009.

\bibitem{aminikhanghahi2017survey}
S.~Aminikhanghahi and D.~J. Cook, ``A survey of methods for time series change
  point detection,'' \emph{Knowledge and information systems}, vol.~51, no.~2,
  pp. 339--367, 2017.

\bibitem{truong2018review}
C.~Truong, L.~Oudre, and N.~Vayatis, ``A review of change point detection
  methods,'' \emph{arXiv preprint arXiv:1801.00718}, 2018.

%

\bibitem{bleakley2011group}
K.~Bleakley and J.-P. Vert, ``The group fused lasso for multiple change-point
  detection,'' \emph{arXiv preprint arXiv:1106.4199}, 2011.

\bibitem{liu2013change}
S.~Liu, M.~Yamada, N.~Collier, and M.~Sugiyama, ``Change-point detection in
  time-series data by relative density-ratio estimation,'' \emph{Neural
  Networks}, vol.~43, pp. 72--83, 2013.

\bibitem{matteson2014nonparametric}
D.~S. Matteson and N.~A. James, ``A nonparametric approach for multiple change
  point analysis of multivariate data,'' \emph{Journal of the American
  Statistical Association}, vol. 109, no. 505, pp. 334--345, 2014.


\bibitem{adams2007bayesian}
R.~P. Adams and D.~J. MacKay, ``Bayesian online changepoint detection,''
  \emph{arXiv preprint arXiv:0710.3742}, 2007.

\bibitem{brodsky2013nonparametric}
E.~Brodsky and B.~S. Darkhovsky, \emph{Nonparametric methods in change point
  problems}.\hskip 1em plus 0.5em minus 0.4em\relax Springer Science \&
  Business Media, 2013, vol. 243.


\bibitem{kawahara2009change}
Y.~Kawahara and M.~Sugiyama, ``Change-point detection in time-series data by
  direct density-ratio estimation,'' in \emph{Proceedings of the 2009 SIAM
  International Conference on Data Mining}, 2009, pp. 389--400.


\bibitem{minka2000estimating}
T.~Minka, ``Estimating a dirichlet distribution,'' 2000.

\bibitem{erdman2007bcp}
C.~Erdman, J.~W. Emerson \emph{et~al.}, ``bcp: an r package for performing a
  bayesian analysis of change point problems,'' \emph{Journal of Statistical
  Software}, vol.~23, no.~3, pp. 1--13, 2007.

\bibitem{james2013ecp}
N.~A. James and D.~S. Matteson, ``ecp: An r package for nonparametric multiple
  change point analysis of multivariate data,'' \emph{arXiv preprint
  arXiv:1309.3295}, 2013.

\bibitem{cho2016change}
H.~Cho, ``Change-point detection in panel data via double cusum statistic,''
  \emph{Electronic Journal of Statistics}, vol.~10, no.~2, pp. 2000--2038,
  2016.

\bibitem{ermes2008detection}
M.~Ermes, J.~P{\"a}rkk{\"a}, J.~M{\"a}ntyj{\"a}rvi, and I.~Korhonen,
  ``Detection of daily activities and sports with wearable sensors in
  controlled and uncontrolled conditions,'' \emph{IEEE transactions on
  information technology in biomedicine}, vol.~12, no.~1, pp. 20--26, 2008.

\bibitem{vert2010fast}
J.-P. Vert and K.~Bleakley, ``Fast detection of multiple change-points shared
  by many signals using group lars,'' in \emph{Advances in Neural Information
  Processing Systems}, 2010, pp. 2343--2351.

\end{thebibliography}

\end{document}